\documentclass[conference]{IEEEtran}
\IEEEoverridecommandlockouts

\usepackage{amsmath,amssymb,mathtools}
\usepackage{cite}
\usepackage{booktabs}
\usepackage{algorithm}
\usepackage{algpseudocode}
\usepackage{pgfplots}
\usepackage{xcolor}
\usepackage{url}
\pgfplotsset{compat=1.18}

\newtheorem{assumption}{Assumption}
\newtheorem{theorem}{Theorem}
\newtheorem{remark}{Remark}
\DeclareMathOperator{\softplus}{softplus}
\newcommand{\R}{\mathbb{R}}
\newcommand{\norm}[1]{\left\lVert#1\right\rVert}

\title{Adaptive Stability-Constrained Neural Differential Equations
for Controlled Dynamical Systems with Unknown Inputs}

\author{Syed Pouladi\\
{College of Engineering and Physical Sciences, Khalifa University, Abu Dhabi, United Arab Emirates}
\textit{}}

\begin{document}
\maketitle

\begin{abstract}
Continuous-time neural models are attractive for identifying nonlinear systems,
but a small one-step error can grow rapidly when a learned vector field is rolled
out under inputs that differ from those used for training. This paper develops an
adaptive stability-constrained neural differential equation (AS-NDE) for systems
with measured controls and unmatched, unknown perturbations. The nominal vector
field and a state--input-dependent Riemannian metric are learned jointly. Positive
definiteness is enforced by construction, while a sampled differential inequality
penalizes violations of a prescribed contraction rate. An incremental
input-to-state bound is derived: the distance between two trajectories decays
exponentially up to gains determined by differences in their controls and
disturbances. The statement explicitly accounts for the time derivative of an
input-dependent metric, a term that is easily omitted in heuristic stability
regularizers. We give a reproducible evaluation protocol for a forced Duffing
oscillator and a permanent-magnet synchronous motor (PMSM) model. Because no
measured data or executed training runs accompany this draft, all numerical
curves and tables are clearly identified as illustrative synthetic placeholders;
their PGFPlots coordinates are embedded in the source for direct replacement.
The resulting manuscript is intended as a technically consistent starting point,
not as evidence of empirical superiority before the prescribed experiments are run.
\end{abstract}

\begin{IEEEkeywords}
Neural ordinary differential equations, contraction analysis, incremental
stability, input-to-state stability, nonlinear system identification, PMSM.
\end{IEEEkeywords}

\section{Introduction}
Learning a continuous-time vector field from sampled trajectories is a central
problem in system identification. Classical prediction-error methods provide a
careful statistical framework~\cite{ljung1999,billings2013}, whereas sparse
regression can recover parsimonious governing equations when an appropriate
library is available~\cite{brunton2016,rudy2017}. Deep models trade
interpretability for flexible approximation. In particular, neural ordinary
differential equations (NODEs) parameterize the derivative and differentiate
through a numerical solver~\cite{chen2018}. Augmented and latent variants address
representational or observation-time limitations~\cite{dupont2019,rubanova2019},
and universal differential equations combine scientific structure with learned
components~\cite{rackauckas2020}.

Most engineered systems are not autonomous. Their state depends on command,
load, and environmental signals. Neural controlled differential equations provide
a principled representation of streams as controls~\cite{kidger2020}; here we
consider the complementary identification setting in which a measured physical
input enters a state equation $\dot x=f(x,u)$. Input Concomitant Neural ODEs
(ICODEs) make this distinction explicit by injecting contemporaneous, possibly
nonsmooth extrinsic inputs into the learned dynamics rather than absorbing them
into hidden parameters~\cite{li2025icode}. Even with the correct input interface,
rollout sensitivity remains a central design issue. This is
particularly consequential for electric drives, whose voltage commands, load
torque, and uncertain parameters change across operating points
\cite{pillay1989,krishnan2010}.

Stability-aware learning has consequently received sustained attention.
Architectures inspired by numerical integration can control forward sensitivity
\cite{haber2017}, stable recurrent models constrain long-run amplification
\cite{miller2019,revay2020}, and Lyapunov-based methods seek invariant regions or
certificates~\cite{richards2018,dawson2023}. Contraction analysis is especially
suited to prediction because it compares neighboring trajectories rather than a
trajectory with a fixed equilibrium~\cite{lohmiller1998,forni2014}. Control
contraction metrics extend the same differential viewpoint to feedback design
\cite{manchester2017}. Recent ControlSynth Neural ODEs (CSODEs) show that a
scalable nonlinear NODE architecture can satisfy tractable linear inequalities
that guarantee convergence, while an auxiliary control term helps represent
multi-scale dynamics~\cite{mei2024csode}. ICODE likewise provides sufficient
contraction conditions for its input-concomitant construction
\cite{li2025icode}. These developments motivate, rather than obviate, the question
studied here: how to quantify the separation of two trajectories when their
measured inputs and unknown disturbances are not identical.

This paper makes three contributions. First, it introduces an input-conditioned
metric $M_\phi(x,u)$ whose positive definiteness is structural and whose total
derivative is retained in training. Second, it establishes an incremental
input-to-state contraction bound for differing controls and unknown disturbances.
Third, it specifies reproducible Duffing and PMSM evaluations, including
ablation criteria and replaceable plot data. Numerical values in this draft are
illustrative only; this explicit separation between method and unexecuted evidence
is necessary for a defensible experimental paper.

\section{Related Work}
\subsection{Data-Driven Continuous-Time Dynamics}
NODEs~\cite{chen2018} use adaptive ODE solvers and adjoint sensitivities.
Augmentation can relax topological restrictions~\cite{dupont2019}, while latent
ODEs model irregular observations~\cite{rubanova2019}. Neural CDEs incorporate an
entire observation path~\cite{kidger2020}. Physics-informed neural networks
penalize differential-equation residuals~\cite{raissi2019}, and broader
physics-informed learning combines data with inductive bias~\cite{karniadakis2021}.
Alternative structured models include Hamiltonian~\cite{greydanus2019} and
Lagrangian neural networks~\cite{cranmer2020}. Koopman embeddings pursue linear
latent evolution~\cite{lusch2018,otto2019,takeishi2017}. These approaches improve
structure or expressivity, but do not by themselves give an incremental robustness
bound for a controlled learned vector field.

\subsection{Stability and Robustness}
Input-to-state stability (ISS) formalizes the effect of bounded inputs on state
size~\cite{sontag1989}; incremental stability instead compares pairs of solutions
\cite{angeli2002}. Differential Lyapunov theory unifies contraction-like
certificates~\cite{forni2014}. Neural certificates have been learned for regions
of attraction~\cite{richards2018} and surveyed in the context of safe learning
and control~\cite{dawson2023}. Related safety constraints based on barrier
functions address set invariance rather than trajectory convergence
\cite{ames2017}. AS-NDE targets prediction sensitivity and yields an incremental
ISS estimate in a learned, input-dependent metric.

\section{Problem Formulation}
Consider trajectories on compact sets $\mathcal X\subset\R^n$ and
$\mathcal U\subset\R^m$ governed by
\begin{equation}
 \dot x=f_\star(x,u)+Ew, \qquad y=Cx+v,
 \label{eq:true}
\end{equation}
where $u$ is measured, $w\in\R^r$ is an unknown essentially bounded disturbance,
$v$ is measurement noise, and $E$ is a known or conservative disturbance
injection matrix. An identity $E$ is appropriate when the channel is unknown.
Samples $\mathcal D=\{(t_k,y_k,u_k)\}$ may come from multiple experiments.
The objectives are to predict the nominal flow and to limit the sensitivity of a
rollout to initial-state, input, and disturbance mismatch.

We learn
\begin{equation}
 \dot{\hat x}=f_\theta(\hat x,u),\qquad \hat y=C\hat x,
 \label{eq:model}
\end{equation}
with a multilayer perceptron or a structured physical residual. The formulation
does not attempt to identify a unique realization of $w$ from state data; without
additional assumptions that inverse problem is generally unidentifiable. Instead,
$w$ appears in a robustness certificate.

\begin{assumption}[Regularity and bounded metric]
$f_\theta$ and $M_\phi$ are continuously differentiable on the compact training
domain, $u$ is absolutely continuous, and constants $0<\underline m\leq\bar m$
exist such that
\begin{equation}
 \underline m I\preceq M_\phi(x,u)\preceq\bar m I.
 \label{eq:metricbounds}
\end{equation}
\end{assumption}

\section{Adaptive Stability-Constrained NDE}
\subsection{Metric Parameterization}
Let $L_\phi(x,u)$ be lower triangular. Its diagonal is
$\softplus(\ell_{ii})+\epsilon_M$ and its strict lower part is unconstrained. We set
\begin{equation}
 M_\phi(x,u)=L_\phi(x,u)L_\phi(x,u)^\top+\epsilon_M I,
 \label{eq:M}
\end{equation}
which ensures positive definiteness. Spectral penalties on $M$ enforce practical
upper and lower bounds. Denote $A_\theta=\partial f_\theta/\partial x$ and define
the total derivative along a nominal trajectory by
\begin{equation}
 \dot M=\sum_i\frac{\partial M}{\partial x_i}f_{\theta,i}(x,u)
       +\sum_j\frac{\partial M}{\partial u_j}\dot u_j.
 \label{eq:Mdot}
\end{equation}
Thus, training samples must include $\dot u$ or a differentiable interpolation of
$u$. A piecewise-constant command can instead be handled interval by interval,
with metric jumps checked separately.

The contraction residual at rate $\lambda>0$ is
\begin{equation}
 S_\Theta=\dot M+A_\theta^\top M+MA_\theta+2\lambda M,
 \label{eq:residual}
\end{equation}
where $\Theta=(\theta,\phi)$. Negative semidefiniteness of $S_\Theta$ is encouraged
at collocation points using
\begin{equation}
 \mathcal L_{\rm ctr}=\frac1{N_c}\sum_{i=1}^{N_c}
 \softplus\!\left(\frac{\lambda_{\max}(S_{\Theta,i})}{\tau}\right)\tau.
 \label{eq:lctr}
\end{equation}
Autodifferentiation supplies Jacobian--vector products. For large $n$, power
iteration estimates the leading eigenvalue without forming a full eigendecomposition.

\subsection{Prediction and Regularization}
For a segment $[t_k,t_{k+H}]$, the model is integrated using the same interpolated
input used to compute~\eqref{eq:Mdot}. The objective is
\begin{align}
 \mathcal L={}&\frac1{NH}\sum_{q,k}\norm{W_y(\hat y_{q,k}-y_{q,k})}_2^2
 +\rho_c\mathcal L_{\rm ctr}+\rho_M\mathcal L_{\rm metric}
 +\rho_J\mathcal L_{\rm Jac},                                      \label{eq:loss}\\
 \mathcal L_{\rm metric}={}&\frac1{N_c}\sum_i
 [\max(0,\underline m-\lambda_{\min}(M_i))^2
 +\max(0,\lambda_{\max}(M_i)-\bar m)^2].
\end{align}
The Jacobian penalty is optional and prevents implausibly stiff fits outside dense
regions. Collocation points mix observed states, rollout states, and uniformly
jittered nearby states; checking only observed points cannot certify the region
between trajectories.

\begin{algorithm}[t]
\caption{Training AS-NDE}
\label{alg:train}
\begin{algorithmic}[1]
\Require Segments $(y,u)$; rates $\lambda,\rho_c,\rho_M$; warm-up epochs $N_w$
\State initialize $f_\theta$, $L_\phi$, and an initial-state encoder if needed
\For{each minibatch}
  \State interpolate $u(t)$ and evaluate $\dot u(t)$ away from command jumps
  \State integrate~\eqref{eq:model}; compute multi-step prediction loss
  \State sample observed, rollout, and locally perturbed collocation states
  \State form~\eqref{eq:M}--\eqref{eq:residual} by automatic differentiation
  \State ramp $\rho_c$ from zero after $N_w$; update $\Theta$ using~\eqref{eq:loss}
\EndFor
\State select the checkpoint by validation rollout error subject to residual rate
\end{algorithmic}
\end{algorithm}

\section{Incremental Robustness Analysis}
Consider two solutions with inputs $(u_1,w_1)$ and $(u_2,w_2)$. Connect their
instantaneous states and inputs by a smooth path indexed by $s\in[0,1]$.
The variational dynamics are
\begin{equation}
 \delta\dot x=A_\theta\delta x+B_\theta\delta u+E\delta w,
 \quad B_\theta=\frac{\partial f_\theta}{\partial u}.
 \label{eq:variational}
\end{equation}

\begin{theorem}[Incremental input-to-state contraction]
Suppose Assumption~1 holds and
$\dot M+A_\theta^\top M+MA_\theta\preceq-2\lambda M$ throughout a forward-invariant
domain. Let
\begin{equation}
 b_u=\sup\norm{M^{1/2}B_\theta}_2,\qquad
 b_w=\sup\norm{M^{1/2}E}_2.
\end{equation}
Then the Riemannian distance $d_M$ between the two trajectories satisfies
\begin{align}
 d_M(t)\leq{}&e^{-\lambda t}d_M(0)
 +\int_0^t e^{-\lambda(t-\tau)}
 [b_u\norm{u_1-u_2}+b_w\norm{w_1-w_2}]d\tau. \label{eq:iss}
\end{align}
Consequently,
\begin{align}
 \norm{x_1(t)-x_2(t)}\leq{}&\sqrt{\frac{\bar m}{\underline m}}
 e^{-\lambda t}\norm{x_1(0)-x_2(0)} \nonumber\\
 &+\frac{b_u\norm{u_1-u_2}_{\infty,[0,t]}
 +b_w\norm{w_1-w_2}_{\infty,[0,t]}}
 {\lambda\sqrt{\underline m}}.                         \label{eq:euclidean}
\end{align}
\end{theorem}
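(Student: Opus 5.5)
The argument is the standard differential-Lyapunov (contraction) proof, run along a path of intermediate trajectories. First, a one-parameter family of solutions is built. For $s\in[0,1]$ let $x(t,s)$ solve $\dot x=f_\theta(x,u(t,s))+Ew(t,s)$ with $u(t,s)=(1-s)u_1(t)+su_2(t)$ and $w(t,s)=(1-s)w_1(t)+sw_2(t)$. The initial state $x(0,s)$ is a minimizing geodesic of $M(\cdot,u(0,s))$ joining $x_1(0)$ to $x_2(0)$; an $\varepsilon$-optimal smooth path suffices if geodesics are not guaranteed. Forward invariance of the domain keeps every $x(t,s)$ where the contraction inequality holds. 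Differentiability of solutions with respect to parameters (Assumption~1 gives $C^1$ data) shows that $\delta x=\partial_s x$ obeys the variational equation \eqref{eq:variational} with $\delta u=u_2-u_1$ and $\delta w=w_2-w_1$.

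Next comes the differential estimate. Set $V(t,s)=\delta x^\top M(x,u)\,\delta x$. Along each fixed-$s$ trajectory, the total derivative of $M$ is exactly \eqref{eq:Mdot}, evaluated with the $s$-input and its derivative $\dot u(t,s)$ (absolutely continuous by assumption). The drift also contains the term $Ew$, so $\dot M$ picks up an additional $\sum_i\partial_{x_i}M\,(Ew)_i$. I would handle this by interpreting the hypothesis as holding along the actual (perturbed) flow; alternatively one absorbs $w$ into the input argument. This is the main technical subtlety, and I would state it explicitly. With that convention,
\[
\dot V=\delta x^\top(\dot M+A_\theta^\top M+MA_\theta)\delta x+2\delta x^\top M(B_\theta\delta u+E\delta w)\le-2\lambda V+2\sqrt V\,(b_u\norm{\delta u}+b_w\norm{\delta w}),
\]
where Cauchy--Schwarz in the $M$ inner product gives $|\delta x^\top M B\delta u|\le\norm{M^{1/2}\delta x}\norm{M^{1/2}B\delta u}$. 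Writing $\ell=\sqrt V$, we get $\dot\ell\le-\lambda\ell+b_u\norm{\delta u}+b_w\norm{\delta w}$ wherever $\ell>0$. The points where $\ell=0$ need a short upper Dini derivative argument, or one works with $\sqrt{V+\varepsilon}$ and lets $\varepsilon\to0$.

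Then the estimate is integrated in $s$ and in time. Define the path length $L(t)=\int_0^1\ell(t,s)\,ds$. Differentiating under the integral and applying the comparison lemma (Grönwall) gives $L(t)\le e^{-\lambda t}L(0)+\int_0^te^{-\lambda(t-\tau)}[b_u\norm{u_1-u_2}+b_w\norm{w_1-w_2}]d\tau$. Since $d_M(t)\le L(t)$ by definition of the Riemannian distance, and $L(0)=d_M(0)$ (up to $\varepsilon$), this yields \eqref{eq:iss}. Because the metric is input dependent, the distance at time $t$ is taken in $M(\cdot,u(t,s))$ along the path, and I would note this explicitly.

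Finally comes the Euclidean conversion. The bounds \eqref{eq:metricbounds} give $\sqrt{\underline m}\norm{x_1-x_2}\le d_M$ and $d_M(0)\le\sqrt{\bar m}\norm{x_1(0)-x_2(0)}$, the latter by taking the straight line as a competitor path. Bounding the input terms by their sup norms and using $\int_0^te^{-\lambda(t-\tau)}d\tau\le1/\lambda$, then dividing by $\sqrt{\underline m}$, gives \eqref{eq:euclidean}. The expected obstacle is the second step, namely justifying $\dot M$ along perturbed, input-interpolated trajectories and the nonsmoothness of $\sqrt V$. The remaining steps are routine.
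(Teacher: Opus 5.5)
Your proposal is correct and follows the paper's own route. It uses the differential Lyapunov function $V=\delta x^\top M\delta x$ along a family connecting the two solutions, Cauchy--Schwarz in the metric to reach $D^+\sqrt V\le-\lambda\sqrt V+b_u\norm{\delta u}+b_w\norm{\delta w}$, integration and minimization over paths, and metric equivalence for the Euclidean bound. Your version is more careful than the paper's sketch, which silently applies the inequality with the nominal $\dot M$ of \eqref{eq:Mdot} along disturbed trajectories; your explicit handling of the extra $\partial_x M\,Ew$ term and of the $s$-dependent input in $d_M$ is a needed clarification, not a gap.
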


\begin{IEEEproof}
For $V=\delta x^\top M\delta x$, differentiate along the connecting path and use
the assumed matrix inequality:
\begin{equation}
 \dot V\leq-2\lambda V+2\delta x^\top M
 (B_\theta\delta u+E\delta w).
\end{equation}
Cauchy--Schwarz in the metric gives
$D^+\sqrt V\leq-\lambda\sqrt V+b_u\norm{\delta u}+b_w\norm{\delta w}$.
Integration and then minimization over connecting paths yield~\eqref{eq:iss}.
Metric equivalence from~\eqref{eq:metricbounds} yields~\eqref{eq:euclidean}.
\end{IEEEproof}

\begin{remark}
The result certifies the learned model on the domain where the inequality holds;
it is not automatically a certificate for the unknown plant. A plant-level claim
requires a verified bound on $f_\star-f_\theta$ and its differential, or an
independent validation argument. Sampled residual penalties are empirical
regularizers unless completed by formal verification.
\end{remark}

\section{Experimental Protocol and Draft Placeholders}
\textbf{Status of evidence:} no user-supplied measurements, code outputs, random
seeds, or trained checkpoints were available when this manuscript was prepared.
Accordingly, every number in Table~\ref{tab:results} and every coordinate in
Figs.~\ref{fig:duffing}--\ref{fig:error} is an \emph{illustrative synthetic draft
placeholder}. They must be replaced by executed results before submission.

\subsection{Controlled Duffing Benchmark}
Use the forced oscillator
\begin{equation}
 \dot x_1=x_2,quad
 \dot x_2=-0.4x_2-x_1-0.5x_1^3+u(t)+w(t).              \label{eq:duffing}
\end{equation}
Generate trajectories with a high-accuracy solver, independently drawing initial
states from $[-1.5,1.5]\times[-1,1]$. Training commands are sums of three sinusoids
with randomized amplitude, frequency, and phase. Test commands use disjoint
frequency bands and include step sequences. Report mean and standard deviation
over five data seeds and five initialization seeds. Baselines should include a
plain NODE~\cite{chen2018}, an input-conditioned NODE using the same
$f_\theta(x,u)$ but no stability loss, SINDy with control-inspired libraries
\cite{brunton2016}, and AS-NDE. All neural baselines must use the same parameter
budget, solver tolerances, batches, and early-stopping rule.

\begin{figure}[t]
\centering
\begin{tikzpicture}
\begin{axis}[width=\columnwidth,height=4.0cm,xlabel={$t$ (s)},ylabel={$x_1$},
grid=both,legend style={font=\scriptsize,at={(0.02,0.98)},anchor=north west},
tick label style={font=\scriptsize},label style={font=\scriptsize}]
\addplot[black,thick] coordinates {(0,0)(1,.46)(2,.82)(3,.56)(4,-.08)(5,-.62)(6,-.73)(7,-.26)(8,.39)(9,.70)(10,.38)};
\addlegendentry{reference}
\addplot[blue,dashed,thick] coordinates {(0,0)(1,.44)(2,.79)(3,.52)(4,-.10)(5,-.58)(6,-.68)(7,-.21)(8,.37)(9,.65)(10,.35)};
\addlegendentry{AS-NDE placeholder}
\addplot[red,dotted,thick] coordinates {(0,0)(1,.42)(2,.75)(3,.45)(4,-.18)(5,-.73)(6,-.88)(7,-.38)(8,.34)(9,.84)(10,.63)};
\addlegendentry{input-NODE placeholder}
\end{axis}
\end{tikzpicture}
\caption{Illustrative synthetic Duffing rollout. Coordinates are placeholders,
not outputs of an executed experiment.}
\label{fig:duffing}
\end{figure}

\subsection{PMSM Protocol}
For a surface-mounted PMSM in the rotating $dq$ frame, use
\begin{align}
 \dot i_d&=(v_d-R_si_d+\omega_eL_qi_q)/L_d,\nonumber\\
 \dot i_q&=(v_q-R_si_q-\omega_e(L_di_d+\psi_f))/L_q.   \label{eq:pmsm}
\end{align}
The state is $(i_d,i_q)$, the measured input is $(v_d,v_q,\omega_e)$, and parameter
drift, inverter nonlinearity, and load-related effects are treated as disturbances.
This model follows established PMSM $dq$ dynamics~\cite{pillay1989,krishnan2010}.
A closely related differential-neural-network study has already demonstrated
continuous-time PMSM current reconstruction and short- and long-horizon prediction
under no-load and varying-load conditions~\cite{mei2025pmsm}. It therefore forms
a natural application-specific baseline: its published differential model tests
whether continuous-time learning is effective for PMSM currents, whereas the
present protocol additionally asks whether an adaptive contraction metric improves
sensitivity to operating-point and disturbance mismatch. Any comparison must use
the same raw runs and trajectory-level split; results from the earlier study must
not be copied into Table~\ref{tab:results} as if they arose from this implementation.
A credible study should split complete operating trajectories---not overlapping
windows---by speed, torque, and direction. It should report in-distribution and
held-out operating regions, current RMSE in amperes, normalized RMSE, worst-case
error, and contraction-residual violation rate. Sensor bandwidth, sampling time,
anti-alias filtering, current limits, and voltage normalization must be disclosed.

\begin{table}[t]
\caption{Illustrative Synthetic Draft Results---Replace Before Submission}
\label{tab:results}
\centering
\footnotesize
\begin{tabular}{lccc}
\toprule
Method & Duffing NRMSE & PMSM $i_d$ RMSE & Violation (\%)\\
& (20-s rollout) & (A) & $\lambda_{\max}(S)>0$\\
\midrule
NODE                 & $0.184$ & $0.42$ & $61.0$\\
Input-conditioned NODE & $0.112$ & $0.31$ & $48.0$\\
AS-NDE, fixed $M$    & $0.078$ & $0.25$ & $9.0$\\
AS-NDE, adaptive $M$ & $0.049$ & $0.19$ & $2.0$\\
\bottomrule
\end{tabular}
\end{table}

\begin{figure}[t]
\centering
\begin{tikzpicture}
\begin{semilogyaxis}[width=\columnwidth,height=4.0cm,xlabel={horizon (s)},
ylabel={rollout RMSE},grid=both,
legend style={font=\scriptsize,at={(0.02,0.98)},anchor=north west},
tick label style={font=\scriptsize},label style={font=\scriptsize}]
\addplot[red,dotted,thick] coordinates {(1,.025)(2,.040)(5,.085)(10,.15)(15,.25)(20,.38)};
\addlegendentry{NODE placeholder}
\addplot[orange,dashed,thick] coordinates {(1,.021)(2,.031)(5,.054)(10,.082)(15,.12)(20,.17)};
\addlegendentry{input-NODE placeholder}
\addplot[blue,solid,thick] coordinates {(1,.020)(2,.026)(5,.036)(10,.046)(15,.055)(20,.064)};
\addlegendentry{AS-NDE placeholder}
\end{semilogyaxis}
\end{tikzpicture}
\caption{Illustrative synthetic horizon-error curves. Replace the embedded
coordinates with mean values and add seed-wise uncertainty bands.}
\label{fig:error}
\end{figure}

\subsection{Ablations and Reproducibility}
The principal ablations are: (i) remove $\mathcal L_{\rm ctr}$; (ii) replace
$M_\phi(x,u)$ by $I$; (iii) use $M_\phi(x)$; (iv) omit the $\partial M/\partial u$
term; and (v) vary the target rate $\lambda$. Accuracy alone is insufficient:
report the largest and 95th-percentile values of $\lambda_{\max}(S)$ on a dense
held-out grid, metric condition numbers, solver function evaluations, and wall-clock
cost. Hyperparameters, preprocessing statistics, exact splits, dependency versions,
and all seeds should accompany the final artifact. Statistical comparisons should
use paired seed-level differences with confidence intervals rather than selecting
the best run.

\subsection{Implementation Details to Be Fixed Before Evaluation}
To make comparisons interpretable, the vector-field architecture should be fixed
before looking at test trajectories. A suitable low-dimensional configuration is
a four-layer multilayer perceptron with 64 hidden units per layer, smooth
$\tanh$ activations, and an affine output layer. State and input channels are
standardized using training-set statistics only. The same network is used for the
plain and input-conditioned NODE baselines; the former receives state alone and
the latter receives the concatenated state and input. AS-NDE adds a metric network
with three 48-unit hidden layers. Its output contains $n(n+1)/2$ entries for the
triangular factor in~\eqref{eq:M}. This separation keeps the predictor capacity
identical and prevents an apparent benefit caused merely by adding parameters to
the vector field.

Use a fixed-step fourth-order Runge--Kutta method during controlled synthetic
experiments so that every method incurs the same discretization. Repeat the final
comparison with an adaptive Dormand--Prince solver to check that rankings are not
an artifact of the selected step size. For hardware data, the integration grid
must coincide with sample times, and intermediate input values should be generated
by a stated zero-order hold or differentiable interpolation. Because
Eq.~\eqref{eq:Mdot} requires $\dot u$, cubic interpolation is convenient for the
metric loss, but it can overshoot abrupt voltage commands. A practical alternative
is a smoothed piecewise-linear interpolant, with a mask that excludes a narrow
neighborhood of discontinuities from the differential residual.

Optimization should use Adam for an exploratory stage followed by a smaller fixed
learning rate. Gradient clipping applies identically to all neural methods. The
contraction weight is ramped linearly from zero over the first quarter of training;
otherwise the randomly initialized metric can dominate before the vector field
has acquired the direction of the flow. Table~\ref{tab:hyper} records a proposed
starting configuration, not tuned or validated values. All entries must be frozen
using the validation set and published with the final results.

\begin{table}[t]
\caption{Proposed Starting Configuration (Not Yet Tuned)}
\label{tab:hyper}
\centering
\footnotesize
\begin{tabular}{ll}
\toprule
Item & Proposed value\\
\midrule
Vector field & $4\times64$, $\tanh$\\
Metric factor network & $3\times48$, $\tanh$\\
Rollout segment length & 64 samples\\
Optimizer / initial step & Adam / $10^{-3}$\\
Batch size & 32 segments\\
Metric floor $\epsilon_M$ & $10^{-3}$\\
Target rate $\lambda$ & validation choice in $[0.05,0.5]$\\
Collocation mixture & 50\% data, 25\% rollout, 25\% jitter\\
Seeds & 5 data $\times$ 5 initialization\\
\bottomrule
\end{tabular}
\end{table}

\subsection{Evaluation Metrics and Certificate Audit}
Let $e_{q,k}=\hat y_{q,k}-y_{q,k}$ for trajectory $q$. Report per-channel RMSE in
physical units and normalized RMSE
\begin{equation}
 \mathrm{NRMSE}=\sqrt{\frac{\sum_{q,k}\norm{e_{q,k}}_2^2}
 {\sum_{q,k}\norm{y_{q,k}-\bar y}_2^2}}.
\end{equation}
Normalization by a test-set range is sensitive to a single outlier and should not
be the only measure. Horizon-dependent error is obtained by resetting all methods
to the same measured state and rolling them for fixed horizons without further
correction. Report both teacher-initialized rollouts and estimated-initial-state
rollouts if the actual state is unavailable. The latter must use the same encoder
or observer across methods.

The certificate audit is distinct from prediction evaluation. Draw a held-out set
of state--input--input-derivative triples from the convex hull of the identification
domain, plus a thin exterior shell to reveal boundary failures. For each point,
record $r=\lambda_{\max}(S_\Theta)$, the smallest and largest metric eigenvalues,
and the condition number. Report the maximum $r$, quantiles, and the fraction
$r>0$. A zero violation rate on finite samples is not a proof. If formal regional
certification is required, interval bound propagation, mixed-integer verification,
or a sum-of-squares relaxation must bound the residual between samples; the
selected technique and approximation error then become part of the theorem's
premises.

Robustness experiments should perturb one factor at a time. For the Duffing system,
evaluate unseen command spectra, initial-condition displacement, additive bounded
disturbance, and parameter shift in damping and cubic stiffness. For PMSM data,
hold out temperature-dependent resistance ranges, speed bands, torque steps, and
voltage saturation events where available. Plot error against perturbation
magnitude rather than reporting a single favorable setting. The slope and
saturation level can then be compared qualitatively with the gain structure in
Eq.~\eqref{eq:euclidean}, while acknowledging that the theorem concerns the learned
model and not an unverified physical plant.

\subsection{Threats to Validity}
Four confounders require particular care. First, segment leakage occurs when
overlapping windows from one physical run are split across training and test sets;
splitting must precede window extraction. Second, solver tolerances can favor a
smoother model by allocating it fewer function evaluations; accuracy and compute
should therefore be reported together. Third, collocation density is an additional
form of supervision, so baseline regularization and compute budgets must be matched.
Fourth, simulated PMSM data can make the analytic $dq$ structure unrealistically
easy to learn. Hardware tests, or at minimum a simulator with dead time, saturation,
parameter drift, sensor noise, and discretized control, are needed for an industrial
claim.

An honest negative result is also informative. If AS-NDE reduces residual
violations but does not improve held-out rollout error, the conclusion should be
that the selected certificate regularizes sensitivity without demonstrated
predictive benefit. If it improves only under small disturbances, the tested range
should delimit the claim. These reporting rules prevent the illustrative curves in
this draft from becoming implicit evidence.

\section{Discussion}
An adaptive metric can be less conservative than a Euclidean certificate, but it
adds derivatives and conditioning risks. The contraction weight should therefore
be warmed up after the predictor learns a coarse vector field. Very large
$\lambda$ can collapse useful dynamics or induce stiffness. Moreover, disturbance
robustness in~\eqref{eq:euclidean} depends on $b_w/\lambda$; a small sampled
residual is not meaningful if the learned metric is poorly conditioned.

The framework also has structural limitations. A globally contracting model cannot
represent systems with multiple isolated attractors under a common fixed input.
Certification over high-dimensional domains remains difficult, and piecewise
commands require a hybrid treatment at jumps. Finally, PMSM hardware data are
needed to separate improvements due to a stability constraint from those due to
ordinary regularization or favorable simulation assumptions.

\section{Conclusion}
AS-NDE combines a controlled neural vector field with an input-dependent
contraction metric and an incremental ISS analysis. The derivation exposes the
metric total derivative and distinguishes a sampled training penalty from a formal
plant certificate. The included experimental section is a reproducible protocol
with deliberately labeled synthetic placeholders. The next necessary step is to
run the stated baselines on released trajectory splits, replace all placeholder
coordinates, and report uncertainty and residual coverage before making empirical
claims.

\end{document}